\newcommand{\argmax}{\operatornamewithlimits{argmax}}
\theoremstyle{plain}
\newtheorem{theorem}{Theorem}
\newtheorem{proposition}{Proposition}
\newtheorem{problem}{Problem}
\theoremstyle{definition} 
\newcommand{\E}{{\mathbb{E}}}
\newcommand{\e}{{\bf e}}
\newcommand{\dd}{{\rm d}}
\title{\Large \bf Optimal contracts under competition when uncertainty from
  adverse selection and moral hazard are present\footnotetext{Natalie
    Packham, Berlin School of Economics and Law, Department of
    Business and Economics, Badensche Str.\ 52,
  10825 Berlin, Germany. Email: packham@hwr-berlin.de}%
\footnotetext{Accepted for publication in {\em Statistics and Probability Letters}}}
\author{\large N.\ Packham}
\begin{document}
\maketitle

\begin{abstract}
  In a continuous-time setting where a risk-averse agent controls the
  drift of an output process driven by a Brownian motion, optimal
  contracts are linear in the terminal output; this result is
  well-known in a setting with moral hazard and -- under stronger
  assumptions -- adverse selection. 
  We show that this result continues to hold when in addition
  reservation utilities are type-dependent. This type of problem
  occurs in the study of optimal compensation problems involving
  competing principals.\medskip

\noindent Keywords:
  Principal-agent modelling; contract design; stochastic process;
  stochastic control%
\end{abstract}

\maketitle

\section{Introduction}
\label{sec:introduction}

The purpose of this note is to show that linear payoffs arise as
optimal contracts offered by a principal to agents in a setting with
moral hazard, private information and competition for agents.  The
design of compensation schemes in a principal-agent relationship
typically involves a bonus component to incentivize the agent (e.g.\
employee, worker) to act in the principal's (employer's) interest. The
complicated, often highly non-linear form of contracts suggested by
economic models is seldom met in reality where bonus schemes
encountered are simple and frequently linear in the output generated
by the agent. One explanation for this gap between theory and practice
lies in the fact that linear compensation schemes are robust in the
richer, more diffuse real environment, whereas the optimal contract in
a highly stylised economic model fails at the slightest change of
model assumptions or parameters.

Against this backdrop, \citet{Holmstrom1987} show that the optimal
contract for an agent controlling the drift of an output process
driven by a Brownian motion, when the principal observes only the
output process, is a linear function of the terminal output. The
principal's inability to directly observe the action of the agent is a
{\em moral hazard problem}, implying that the agent's effort cannot be
contractually agreed, but instead the agent must be incentivized to
exert effort.  Restricting the principal's observability to the
terminal output, \citet{Sung2005} extends this to a setting where
agents have different capabilities that the principal cannot observe
or otherwise directly infer ({\em private information}). In this {\em
  adverse selection problem\/}, it is optimal to offer a menu of
linear contracts designed for the different agent types, incentivizing
each agent to choose the contract designed for their type ({\em
  screening}).

Screening typically provides agent types with different utilities. As
a consequence, if there are several principals competing for agents,
reservation utilities for agents will be type-dependent (as opposed to
the case, for example, where the outside option is not to work at
all). This setting is found in the recent literature on competition,
e.g.\ \citet{Jullien2000,Benabou2016,Bannier2016}.  We prove that the
linearity of contracts in the Holmstrom-Milgrom model carries over to
the case of moral hazard, private information and competition.  The
proofs use techniques from stochastic control theory, see e.g.\
\citet{Fleming1975}.

\section{The model}
\label{sec:model}

The setting is similar to \citet{Holmstrom1987},
\citet{Schaettler1993} and \citet{Sung2005}. An agent employed by a
principal exerts (costly) effort, which in turn increases her
output. The principal observes only the output, consisting of the
agent's effort level and random noise ({\em moral hazard}). Hence, the
effort level cannot be contracted and must be incentivized by a bonus
scheme. With agents of different capabilities that are unobservable 
by the principal ({\em private information}), the principal may find
it optimal to offer different contracts specifically designed for each
type, and must take care that each contract appeals most to the type
that it is designed for. Finally, with several principals competing
for agents, reservation utilities are derived from the outside option
of being employed by another principal. Just as the utilities offered
to agents of different capabilities differ, so do their reservation 
utilities. These must be taken into account by a principal wishing to
employ agents of all types. The principal is hence faced with an
optimization problem taking into account random noise, 
unobservable agent types and type-dependent reservation
utilities. This type of problem can be solved using martingale theory
to derive a Hamilton-Jacobi-Bellman equation, which in turn leads to
constant controls that do not depend on time. 
First, we formalize the setting and then give the principal's
optimization problem in the next section.

 A risk-neutral principal
employs a risk-averse agent whose preferences are expressed by CARA
utility with parameter $\rho$, i.e., $U(x)=1-\e^{-\rho x}$. The
agent's capabilities $\theta\in \{\theta_L,\theta_H\}$ are private
information. The probabilities $\alpha$, resp.\ $1-\alpha$, of meeting
an $H$-type, resp.\ $L$-type, agent are known to the principal.

A type-$k$ agent, by exerting effort $\mu=(\mu_t)_{t\geq 0}$, which is
assumed to be bounded, controls the drift of an output process with
dynamics 
\begin{equation*}
  \dd Z_t = \mu_t\,\theta_k\,\dd t + \sigma\, \dd W_t, \quad t\geq 0,
\end{equation*}
with $\sigma>0$ and where $W=(W_t)_{t\geq 0}$ is a Brownian motion
independent of $\theta$. The agent observes the Brownian motion $W$,
so that $\mu$ is adapted to the filtration generated by the Brownian
motion. The agent's effort is subject to an instantaneous cost
$c(\mu_t)$, $t\geq 0$, where $c(0)=0$, $c$ is strictly increasing,
convex and continuously differentiable. In addition, we require
$c'''\geq 0$ for Proposition \ref{prop:optimalcontract}. We shall
assume that $\theta_H>\theta_L$, expressing that a type-$H$ agent
generates a higher drift at equal effort cost than a type-$L$ agent.

The principal observes the output process $Z=(Z_t)_{t\geq 0}$, but
neither $W$ nor $\mu$, so the compensation, realised at time $1$, can
be contingent on $(Z_t)_{0\leq t\leq 1}$ only.  At time $0$, the
principal offers contracts, consisting of sharing rules
$\{S((Z_t)_{t\in[0,1]},\theta_k), k\in \{H,L\}\}$,
where $S(\cdot,\theta_k)$ denotes the sharing rule {\em designed\/} for
a type-$k$ agent. An agent choosing contract $S(\cdot,\theta_k)$
at time $0$, receives at time $1$
\begin{equation*}
  S((Z_t)_{0\leq t\leq 1},\theta_k) - \int_0^1 c(\mu_t)\, \dd t,
\end{equation*}
while the principal receives
\begin{equation*}
  Z_1 - S((Z_t)_{0\leq t\leq 1},\theta_k).
\end{equation*}

\section{The principal's problem}
\label{sec:principals-problem}

Denote by $S(\cdot,\theta_m)$ the contract designed for an $m$-type
agent. An agent of type $k\in\{H,L\}$, when choosing contract
$S(\cdot, \theta_m)$ exerts effort $\mu^{k,m}=(\mu_t^{k,m})_{t\geq 0}$
and derives certainty equivalent $w_{k,m}$ at time $1$. Whenever
$k=m$, we write $\mu^k$ and $w_k$.  The principal's problem is as
follows:

\begin{problem}
  \label{problem}
  Choose controls $\{\mu^H, \mu^L\}$ and a menu of contracts
  $\{S(\cdot,\theta_H)$, $S(\cdot,\theta_L)\}$ maximising
  \begin{equation*}
    \alpha \, \E\left[Z_1^H - S((Z_t^H)_{0\leq t\leq 1},
      \theta_H)\right] %
    + (1-\alpha) \,\E\left[ Z_1^L - S((Z_t^L)_{0\leq t\leq 1}, \theta_L)\right],
  \end{equation*}
  subject to \smallskip
  \begin{enumerate}[(1)]
    \addtolength{\itemsep}{5pt}
  \item $\dd Z_t^k = \mu_t^k\theta_k\, \dd t + \sigma \dd W_t$,
    $k\in \{H,L\}$,
  \item
    $\mu_t^{k,m}\in \argmax_{(\mu_t)_{0\leq t\leq 1}}
    \E\left[U\left(S((Z_t)_{0\leq t\leq 1},\theta_m) - \int_0^1
        c(\mu_t)\, \dd t\right)\right]$, where
    $\dd Z_t=\mu_t\theta_k \, \dd t + \sigma\, \dd W_t$, and
    $k,m\in \{H,L\}$,
  \item
    $\E\left[U\left(S((Z_t^k)_{0\leq t\leq }, \theta_k) - \int_0^1
        c(\mu_t^k)\, \dd t\right)\right]\geq
    \E\left[U\left(S((Z_t^{k,m})_{0\leq t\leq 1},\theta_m) - \int_0^1
        c(\mu_t^{m,k})\, \dd t\right)\right]$, where
    $\dd Z_t^{k,m} = \mu_t^{k,m}\theta_k\, \dd t + \sigma\, \dd W_t$
    and $m,k\in \{H,L\}$,\hfill {\em (ICC)}
  \item
    $\E\left[U\left(S((Z_t^k)_{0\leq t\leq }, \theta_k) - \int_0^1
        c(\mu_t^k)\, \dd t\right)\right]\geq U(w_k)$, where
    $k\in \{H,L\}$, \hfill {\em (PC)}.
  \end{enumerate}
\end{problem}
\medskip

The first constraint defines the dynamics of the output processes of
each type when choosing the contract designed for her. The second
constraint expresses that agents maximise their expected utility.  The
third constraint, the incentive compatibility constraint (ICC), makes
each agent optimally choose the contract designed from her. Finally,
the fourth constraint, the participation constraint (PC), ensures that
an agent contracts with the principal instead of choosing her outside
option with certainty equivalent $w_k$ (the general results do not
change if it were unprofitable to attract a particular agent type).

\section{The agent's choice of drift}

We consider the agent's problem when faced with a menu of
contracts. The following result is slightly adapted from
\citet{Holmstrom1987}.

\begin{theorem}
  \label{theorem:hm}
  The adapted stochastic process $(\mu_t)_{0\leq t\leq 1}$ is
  implemented with certainty equivalent $w$ by a type-$k$ agent
  by a sharing rule $S((Z_t)_{0\leq t\leq 1},\theta_k)$ only if
  \begin{multline}
    S((Z_t)_{0\leq t\leq 1}, \theta_k) = w + \int_0^1 c(\mu_t)\, \dd t +
    \int_0^1 \frac{c'(\mu_t)}{\theta_k}\, \dd Z_t - \int_0^1 c'(\mu_t)
    \mu_t\, \dd t + \frac{\rho}{2}\int_0^1
    \left(\frac{c'(\mu_t)}{\theta_k}\right)^2\, \sigma\, \dd t.
    \label{eq:7}
  \end{multline}
\end{theorem}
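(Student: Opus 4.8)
The plan is to characterize the certainty equivalent process associated with a given sharing rule and effort, and then to deduce the stated representation by matching terminal conditions. Fix a type-$k$ agent, a bounded adapted effort $(\mu_t)_{0\le t\le1}$, and a sharing rule $S$. Define the agent's continuation certainty equivalent at time $t$ as the $W_t$-adapted process $(V_t)$ satisfying $U(V_t - \int_0^t c(\mu_s)\,\dd s) = \E\!\left[U\!\left(S - \int_0^1 c(\mu_s)\,\dd s\right)\mid\mathcal F_t\right]$; equivalently, using the CARA form $U(x)=1-\e^{-\rho x}$, the process $M_t := \e^{-\rho\left(V_t - \int_0^t c(\mu_s)\,\dd s\right)}$ is a martingale. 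By the martingale representation theorem there is an adapted process $(a_t)$ with $\dd M_t = a_t\,\dd W_t$, and writing $a_t = -\rho\, M_t\, \beta_t$ for a suitable $(\beta_t)$, an application of It\^o's formula to $V_t = -\frac1\rho\log M_t + \int_0^t c(\mu_s)\,\dd s$ yields the dynamics
\begin{equation*}
  \dd V_t = \left(c(\mu_t) + \tfrac{\rho}{2}\,\sigma^2\,\beta_t^2\right)\dd t + \sigma\,\beta_t\,\dd W_t .
\end{equation*}
Here $\beta_t$ is the exposure of the agent's certainty equivalent to the Brownian shock, and $V_0 = w$ since $V_0$ is deterministic and equals the time-$1$ certainty equivalent by definition of ``implemented with certainty equivalent $w$''.

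Next I would bring in the incentive (optimality) condition. Constraint~(2) of Problem~\ref{problem} says that $(\mu_t)$ must maximise the agent's expected utility, i.e.\ be the optimal control in the stochastic control problem with state $V$. Because the agent's only lever is the drift $\mu_t\theta_k$ of $Z$ and $\dd Z_t = \mu_t\theta_k\,\dd t + \sigma\,\dd W_t$, the relevant pathwise relation is $\sigma\,\dd W_t = \dd Z_t - \mu_t\theta_k\,\dd t$, so the diffusion term $\sigma\,\beta_t\,\dd W_t$ can be rewritten in terms of $\dd Z_t$. Pointwise optimality of $\mu_t$ in the drift of $V_t$ (after substituting $\sigma\,\dd W_t = \dd Z_t - \mu_t\theta_k\,\dd t$ the drift becomes $c(\mu_t) + \beta_t\theta_k\mu_t + \tfrac{\rho}{2}\sigma^2\beta_t^2$ minus $\beta_t\,\dd Z_t/\dd t$ formally) gives the first-order condition $c'(\mu_t) = -\beta_t\theta_k$, whence $\beta_t = -c'(\mu_t)/\theta_k$. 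Substituting this back, and using $\sigma\,\beta_t\,\dd W_t = \beta_t\,\dd Z_t - \beta_t\mu_t\theta_k\,\dd t = -\tfrac{c'(\mu_t)}{\theta_k}\,\dd Z_t + c'(\mu_t)\mu_t\,\dd t$, the dynamics of $V$ become
\begin{equation*}
  \dd V_t = c(\mu_t)\,\dd t - \frac{c'(\mu_t)}{\theta_k}\,\dd Z_t + c'(\mu_t)\mu_t\,\dd t + \frac{\rho}{2}\left(\frac{c'(\mu_t)}{\theta_k}\right)^2\sigma^2\,\dd t .
\end{equation*}
Integrating from $0$ to $1$ and using $V_0 = w$, $V_1 = S((Z_t)_{0\le t\le1},\theta_k)$ gives exactly~\eqref{eq:7}, up to checking the sign and the precise appearance of $\sigma$ versus $\sigma^2$ in the last term (the statement writes $\sigma$; I would reconcile this with the It\^o term $\tfrac{\rho}{2}\sigma^2\beta_t^2$, presumably a normalisation or typo to be noted).

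The main obstacle is making the first-order-condition step rigorous rather than heuristic: the manipulation ``$\sigma\,\dd W_t = \dd Z_t - \mu_t\theta_k\,\dd t$, then optimise pointwise'' needs to be justified via a proper verification argument. The clean way is to run a stochastic-control/Hamiltonian argument: for an arbitrary admissible effort $(\nu_t)$ the agent controls, write the Girsanov density between the measure under which $W$ is Brownian and the one under which $Z$ has drift $\nu_t\theta_k$, express expected utility accordingly, and show that $V_t$ as constructed above dominates every competitor with equality precisely when the first-order condition $c'(\nu_t)=-\beta_t\theta_k$ holds; convexity of $c$ (and here $c$ strictly convex, $c'$ increasing) guarantees this pointwise maximiser is unique and well-defined, so the implemented $(\mu_t)$ must satisfy it. One must also confirm admissibility/integrability so that all the stochastic integrals and the martingale representation are legitimate — boundedness of $\mu$ and $C^1$ smoothness of $c$ handle the integrands. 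Since the theorem only claims a necessary condition (``only if''), I do not need the converse (that~\eqref{eq:7} actually implements $\mu$), which simplifies matters. I would cite \citet{Holmstrom1987} and \citet{Schaettler1993} for the martingale-methods machinery and present the derivation above as the adaptation to the type-$k$ drift coefficient $\theta_k$.
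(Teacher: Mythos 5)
Your overall route --- define the continuation certainty equivalent, represent the CARA-exponential process $M_t$ via the martingale representation theorem, derive the dynamics of $V_t$, pin down the sensitivity $\beta_t$ through the agent's first-order condition, and integrate from $0$ to $1$ --- is exactly the machinery of Corollary 4.1 of \citet{Schaettler1993} and Theorem 6 of \citet{Holmstrom1987}, which is all the paper itself offers as proof (it proves the theorem by citation only), so in spirit you are on the intended path. However, as written your derivation does not deliver \eqref{eq:7}: the first-order condition carries a sign error. With your parametrisation the $\dd Z_t$-exposure of $V_t$ is $\beta_t$, since $\sigma\beta_t\,\dd W_t=\beta_t\,\dd Z_t-\beta_t\mu_t\theta_k\,\dd t$; hence the agent's marginal benefit of effort is $\beta_t\theta_k$ and the correct condition is $c'(\mu_t)=\beta_t\theta_k$, i.e.\ $\beta_t=+c'(\mu_t)/\theta_k>0$. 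Your condition $c'(\mu_t)=-\beta_t\theta_k$ would make the payout decreasing in output, which cannot incentivise costly effort, and the dynamics you then display integrate to $S=w+\int_0^1 c(\mu_t)\,\dd t-\int_0^1\tfrac{c'(\mu_t)}{\theta_k}\,\dd Z_t+\int_0^1 c'(\mu_t)\mu_t\,\dd t+\tfrac{\rho}{2}\int_0^1(\tfrac{c'(\mu_t)}{\theta_k})^2\sigma^2\,\dd t$, i.e.\ \eqref{eq:7} with the third and fourth terms sign-flipped --- not ``exactly \eqref{eq:7}''. The slip enters when you substitute $\sigma\,\dd W_t=\dd Z_t-\mu_t\theta_k\,\dd t$ and record the drift as $c(\mu_t)+\beta_t\theta_k\mu_t+\ldots$ instead of $c(\mu_t)-\beta_t\theta_k\mu_t+\ldots$. (The lone $\sigma$ in the last term of \eqref{eq:7} is indeed a typo for $\sigma^2$, as \eqref{eq:8} and \eqref{eq:1} confirm, so that part of your concern is harmless.)

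The second, more substantive gap is the one you flag yourself: for a necessity (``only if'') statement, the step ``optimise pointwise in the drift'' is precisely what has to be proved, and you only sketch how a Girsanov/verification argument would go. One must show that optimality of $\mu$ against the fixed sharing rule forces $\beta_t=c'(\mu_t)/\theta_k$ for a.e.\ $(t,\omega)$ --- e.g.\ by perturbing the effort on a small time set, expressing the change in expected utility through the Girsanov density, and invoking strict convexity of $c$ to identify the unique pointwise maximiser. That is exactly the content of the results the paper cites in place of a proof; until you either carry it out or cite it explicitly at that step, your argument only shows that $S$ admits a representation $w+\int_0^1 c(\mu_t)\,\dd t+\int_0^1\beta_t\,\dd Z_t-\int_0^1\beta_t\mu_t\theta_k\,\dd t+\tfrac{\rho}{2}\int_0^1\beta_t^2\sigma^2\,\dd t$ for \emph{some} sensitivity $\beta$, not that any implementing sharing rule must have the specific form \eqref{eq:7}.
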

\begin{proof}
  See Theorem 6 of \citet{Holmstrom1987} and Corollary 4.1 of
  \citet{Schaettler1993}.
\end{proof}

The first two terms provide a certainty equivalent of $w$ and a direct
compensation of the effort cost should the agent choose to exert
effort $(\mu_t)_{0\leq t\leq 1}$. The third term incentivizes the
agent to choose effort level $\mu$. The last two terms compensate the
agent for the mean and risk of the output process, i.e., they
correspond to the certainty equivalent of the third term.

\begin{proposition}
  \label{prop:effort}
  A type-$k$ agent, when choosing the contract designed for the
  $m$-type agent, derives expected utility
  \begin{equation}
    \label{eq:9}
    \E\left[U\left(w_m + \int_0^1 c'(\mu^{k,m}_t)\mu_t^{k,m}-
        c(\mu_t^{k,m}) - \left(c'(\mu_t^m)\mu_t^m -
          c(\mu_t^m)\right)\, \dd t\right)\right],
  \end{equation}
  where $\mu^{k,m}$ denotes the $k$-type agent's optimal control,
  which solves
  \begin{equation}
    \label{eq:4}
    c'(\mu_t^{k,m}) = \frac{\theta_k}{\theta_m} c'(\mu_t^m). 
  \end{equation}
  Furthermore, the $H$-type agent exerts greater effort and derives
  greater utility from a given contract than the $L$-type agent.
\end{proposition}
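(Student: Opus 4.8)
The plan is to substitute the explicit contract of Theorem~\ref{theorem:hm} into the type-$k$ agent's objective, use the CARA form to strip off the Brownian noise, and thereby reduce the agent's effort choice to a deterministic pointwise maximization. Fix the contract $S(\cdot,\theta_m)$ that implements the (deterministic) effort $\mu^m$ with certainty equivalent $w_m$ as in \ref{eq:7}. If a type-$k$ agent exerts an adapted effort $(\mu_t)_{0\le t\le 1}$, then $\dd Z_t=\mu_t\theta_k\,\dd t+\sigma\,\dd W_t$; inserting this into the $\dd Z$ integral in \ref{eq:7}, I would rewrite the agent's net payoff $S((Z_t)_{0\le t\le1},\theta_m)-\int_0^1 c(\mu_t)\,\dd t$ as
\begin{multline*}
  w_m+\int_0^1\Big(c(\mu^m_t)-c'(\mu^m_t)\mu^m_t+\tfrac{\theta_k}{\theta_m}c'(\mu^m_t)\mu_t-c(\mu_t)\Big)\dd t\\
  {}+\frac{\rho}{2}\int_0^1\Big(\tfrac{c'(\mu^m_t)}{\theta_m}\Big)^2\sigma^2\,\dd t
  +\int_0^1\tfrac{c'(\mu^m_t)}{\theta_m}\,\sigma\,\dd W_t .
\end{multline*}
The only term containing the agent's control is the running term $\tfrac{\theta_k}{\theta_m}c'(\mu^m_t)\mu_t-c(\mu_t)$, while the $\dd W$ loading $\tfrac{c'(\mu^m_t)}{\theta_m}\sigma$, and hence the quadratic-variation term compensating it, are the same whatever the agent's type and effort.

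Next I would use $U(x)=1-\e^{-\rho x}$ to turn maximization of expected utility into minimization of $\E[\e^{-\rho(\mathrm{payoff})}]$. Factoring out $\e^{-\rho w_m}$ and recognizing $\exp\big(-\tfrac{\rho^2}{2}\int_0^1 g_t^2\,\dd t-\rho\int_0^1 g_t\,\dd W_t\big)$, with $g_t=\sigma c'(\mu^m_t)/\theta_m$ bounded, as a true (Dol\'eans--Dade) martingale, an equivalent change of measure removes the stochastic integral; equivalently, for deterministic $\mu^m$ the noise is a Gaussian independent of $\mu$ whose moment generating function exactly cancels the $\tfrac{\rho}{2}\int g_t^2\,\dd t$ term. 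Either way the agent's problem collapses to maximizing, pathwise and pointwise in $t$, the strictly concave function $\mu\mapsto\tfrac{\theta_k}{\theta_m}c'(\mu^m_t)\,\mu-c(\mu)$; its first-order condition is exactly \ref{eq:4}, and convexity of $c$ supplies both the second-order condition and uniqueness of the maximizer $\mu^{k,m}_t$, which is deterministic and bounded, hence admissible, so it is the agent's optimal response.

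Plugging $\mu=\mu^{k,m}$ back in and using \ref{eq:4} to replace $\tfrac{\theta_k}{\theta_m}c'(\mu^m_t)$ by $c'(\mu^{k,m}_t)$, the running term becomes $c'(\mu^{k,m}_t)\mu^{k,m}_t-c(\mu^{k,m}_t)$; the Brownian term together with its quadratic-variation compensation drops out in the certainty equivalent, and collecting what remains gives precisely the argument of $U$ in \ref{eq:9} (which, for deterministic $\mu^m$, is itself deterministic, so the identity with the stated expected utility is immediate). For the final assertion I would argue straight from \ref{eq:4}: since $\theta_H>\theta_L$ and $c'\ge0$, we get $c'(\mu^{H,m}_t)\ge c'(\mu^{L,m}_t)$, so $\mu^{H,m}_t\ge\mu^{L,m}_t$ (strictly where $c'(\mu^m_t)>0$) by monotonicity of $c'$. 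Because $U$ is increasing it is enough to compare certainty equivalents, i.e.\ the integrand $\phi(\mu^{k,m}_t)-\phi(\mu^m_t)$ with $\phi(x):=c'(x)x-c(x)$; since $\phi'(x)=c''(x)\,x\ge0$ the map $\phi$ is nondecreasing, whence $\phi(\mu^{H,m}_t)\ge\phi(\mu^{L,m}_t)$ and the $H$-type's certainty equivalent, hence expected utility, is the larger.

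I expect the main difficulty to lie in the second step: rigorously reducing the agent's stochastic control problem to the pathwise pointwise maximization --- i.e.\ justifying the change of measure (boundedness of the sensitivity so that the exponential is a true martingale) and verifying that the pointwise maximizer is admissible and dominates every other adapted control. A related subtlety is that the clean identification of the attained expected utility with the reduced expression \ref{eq:9} uses that the effort $\mu^m$ underlying the contract is deterministic (the case of interest here), since otherwise the change of measure would distort the outer expectation; this is where the stochastic-control and martingale techniques referred to after Theorem~\ref{theorem:hm} (cf.\ \citet{Schaettler1993}) do the work. The remaining manipulations are routine one-variable calculus and algebra.
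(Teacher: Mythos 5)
Your proposal is correct, but it reaches the key first-order condition \eqref{eq:4} by a genuinely different route than the paper. The paper casts the agent's response as a stochastic control problem in state-space form: it introduces the controlled state process $X^{\mu}$, the conditional value $J(t,x;\mu)$ and the value function $V$, invokes the martingale property to obtain the Hamilton--Jacobi--Bellman equation, and reads off the agent's Hamiltonian $-c(\mu_t)+c'(\mu_t^m)\mu_t\theta_k/\theta_m$, whose maximization gives \eqref{eq:4}. You instead exploit the CARA structure directly: after substituting \eqref{eq:7} and $\dd Z_t=\mu_t\theta_k\,\dd t+\sigma\,\dd W_t$, you factor $\e^{-\rho(\text{payoff})}$ into the Dol\'eans--Dade exponential of the bounded integrand $-\rho\sigma c'(\mu^m_t)/\theta_m$ (a true martingale, hence a legitimate change of measure that absorbs both the $\dd W$ term and the $\tfrac{\rho}{2}\int(c'(\mu^m_t)/\theta_m)^2\sigma^2\,\dd t$ compensation) times $\e^{-\rho\int_0^1 h_t(\mu_t)\,\dd t}$ with $h_t(\mu)=\tfrac{\theta_k}{\theta_m}c'(\mu^m_t)\mu-c(\mu)+c(\mu^m_t)-c'(\mu^m_t)\mu^m_t$; since $h_t(\mu_t)\le h_t(\mu^{k,m}_t)$ pathwise and the bound is deterministic, monotonicity of the exponential gives domination of every admissible adapted control under the new (equivalent) measure --- this one-line observation closes the ``domination'' concern you flag at the end, so your argument is complete as sketched. (Only your parenthetical ``equivalently, the noise is independent of $\mu$'' should be treated with care: it is valid once the candidate control is deterministic, not for a general adapted competitor, which is exactly why the measure-change or pathwise bound is needed.) Both routes end in the same pointwise concave maximization, and your final comparison via $\phi(x)=c'(x)x-c(x)$, $\phi'(x)=c''(x)x\ge 0$ coincides with the paper's, except that you compare the two types on the same contract directly while the paper compares the $k$-type on the $m$-contract against the $m$-type's own certainty equivalent $w_m$; these are equivalent here. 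Your approach buys a self-contained verification that avoids the regularity and verification-theorem issues implicit in the HJB derivation, while the paper's dynamic-programming formulation has the advantage of matching the technique reused for the principal's problem in Proposition \ref{prop:optimalcontract}. A caveat common to both proofs: the strict inequalities on effort and utility require $c'(\mu^m_t)>0$ and strict convexity of $c$; with $c$ merely convex they hold weakly.
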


\begin{proof}
  Let $S((Z_t)_{0\leq t\leq 1},\theta_m)$ be the contract designed for
  an $m$-type agent offering $w_m$ and implementing $\mu^m$.  A
  type-$k$ agent implementing $\mu$, that is,
  $\dd Z_t = \mu_t\theta_k\, \dd t + \sigma\, \dd W_t$, $t\geq 0$,
  derives expected utility
  \begin{multline*}
    \E\left[U\left(w_m + \int_0^1 c(\mu_t^m) - c(\mu_t)\, \dd t +
        \int_0^1 \frac{c'(\mu_t^m)}{\theta_m}\, \dd Z_t %
        - \int_0^1 c'(\mu_t^m) \mu_t^m - \frac{\rho}{2}
        \left(\frac{c'(\mu_t^m)}{\theta_m}\right)^2\sigma^2\, \dd
        t\right)\right],
  \end{multline*}
  which can be written as $\E\left[U(X_1^{\mu})\right]$ with state
  process
  \begin{multline*}
    X_u^{\mu}:=w_m\, u + \int_0^u c(\mu_t^m)-c(\mu_t)\, \dd t +
    \int_0^u \frac{c'(\mu_t^m)}{\theta_m} \sigma\, \dd W_t \\%
    + \int_0^u c'(\mu_t^m)\mu_t \frac{\theta_k}{\theta_m} -
    c'(\mu_t^m) \mu_t^m +
    \frac{\rho}{2}\left(\frac{c'(\mu_t^m)}{\theta_m}\right)^2\sigma^2\,
    \dd t.
  \end{multline*}
  By the It\^o formula,
  \begin{equation*}
    \dd X_t^{\mu} = \left[w_m + c(\mu_t^m) - c(\mu_t) +
      c'(\mu_t^m)\left(\mu_t\frac{\theta_k}{\theta_m} - \mu_t^m\right) +
      \frac{\rho}{2}
      \left(\frac{c'(\mu_t^m)}{\theta_m}\right)^2\sigma^2\right]\, \dd
    t + 
    \frac{c'(\mu_t^m)}{\theta_m}\sigma\, \dd W_t.
  \end{equation*}

  Define $J(t,x;\mu)=\E\left[U(X_1^{\mu})|X_t^{\mu}=x\right]$, which
  is once (twice) continuously differentiable in $t$ ($x$) (applying
  Dominated Convergence for differentiating inside the expectation
  operator), so that $J(0,x;\mu)=\E[U(X_1^{\mu})]$ corresponds to the
  objective function. Because $J$ is a martingale, the following PDE
  holds:
  \begin{equation*}
    J_t + \frac{1}{2}\left(\frac{c'(\mu_t^m)}{\theta_m}\right)^2
    \sigma^2 J_{xx} + \left[w_m + c(\mu_t^m) - c(\mu_t) + 
      c'(\mu_t^m)\left(\mu_t\frac{\theta_k}{\theta_m} - \mu_t^m\right) +
      \frac{\rho}{2}
      \left(\frac{c'(\mu_t^m)}{\theta_m}\right)^2\sigma^2\right]\,
    J_x=0, 
  \end{equation*}
  where $J_t, J_x, J_{xx}$ denote the respective partial first- and
  second-order derivatives. Setting $V(t,x)=\sup_{\mu} J(t,x;\mu)$,
  the Hamilton-Jacobi-Bellman PDE is
  \begin{equation*}
    V_t + \frac{1}{2}\sigma^2 V_{xx} + \sup_{\mu}
    \left[c'(\mu_t^m) \frac{\theta_k}{\theta_m}\mu_t -
      c(\mu_t)\right] V_x + \left[c(\mu_t^m) -
      c'(\mu_t^m) \mu_t^m + \frac{\rho}{2}
      \left(\frac{c'(\mu_t^m)}{\theta_m}\right)^2 \sigma^2\right]
    V_x=0,  
  \end{equation*}
  with boundary condition $V(1,x)=U(x)$.  The agent's Hamiltonian is
  given by
  \begin{equation*}
    \mathcal H=-c(\mu_t) +
    c'(\mu_t^m)\mu_t \frac{\theta_k}{\theta_m},
  \end{equation*}
  leading to the optimal effort choice $\mu_t^{k,m}$ fulfilling the
  FOC \eqref{eq:4}.

  If $k=H$ and $m=L$, the FOC expresses that the $H$-type agent exerts
  greater marginal effort on the $L$-type's contract than the
  $L$-type. Moreover, because of the convexity of $c$, it follows that
  $\mu_t^{H,L}>\mu_t^L$. Conversely, if $k=L$ and $m=H$, then
  $\mu_t^{L,H}<\mu_t^H$.

  The $k$-type agent's expected utility is given by \eqref{eq:9}.
  Because $c$ is strictly increasing and convex, the mean value
  theorem implies $c'(x)x > c(x)$, and
  \begin{equation*}
    \frac{\dd}{\dd x} \left[c'(x) x - c(x)\right] =
    c''(x) x>0, \quad x>0. 
  \end{equation*}
  Therefore, when $m=H$ and $k=L$, the integral is strictly positive,
  and the $H$-type agent derives a greater utility from the $L$-type's
  contract than the $L$-type. Conversely, if $m=L$ and $k=H$, the
  $L$-type derives a smaller utility from the $H$-type's contract than
  the $H$-type.
\end{proof}

The classical result is obtained that if both types' reservation
utilities are equal, then the $H$-type has an incentive to imitate if
second-best contracts were offered (i.e., contracts with moral hazard
only), in which case the contract designed for the $L$-type needs to
be distorted to prevent the $H$-type from imitating (e.g.\
\citet{Salanie2005}). If reservation utilities are type-dependent,
then the situation may be reversed, and the $H$-type's contract may
need to be distorted to prevent the $L$-type from imitating.

\section{Optimal contracts}

Turning to Problem \ref{problem}, we restrict the analysis to the case
where the $H$-type has an imitation incentive; the case when the
difference of the types' reservation utilities is sufficiently large
for the $L$-type to have an imitation incentive is treated in a
similar way. We omit the proof of the following well-known results
when the $H$-type has an imitation incentive
(e.g. \citet{Salanie2005}): The $L$-type's {\em (PC)\/} is binding
(constraint (4) in Problem \ref{problem}; to attract the $L$-type) and
the $H$-type's {\em (ICC)\/} is binding (constraint (3) in Problem
\ref{problem}; to prevent the high type from imitating), while the
$L$-type's {\em (ICC)\/} is non-binding. The $H$-type's contract
features the second-best (constant) drift rate $\mu^{H,\star}$ (there
is no reason to deviate from the optimum), while the effort level
$\mu^{L,\star}$ in the contract for the $L$-type is distorted to
prevent the $H$-type from imitating.\medskip

The principal thus solves
\begin{equation*}
  \sup_{\mu^H, \mu^L, S(\cdot, \theta_H), S(\cdot, \theta_L)}\,
  \E\left[\alpha (Z_1^H - S((Z_t^H)_{0\leq t\leq 1}, \theta^H)) +
    (1-\alpha) (Z_1^L - S((Z_t^L)_{0\leq t\leq 1}, \theta^L))\right],
\end{equation*}
subject to\smallskip
\begin{enumerate}[(1)]
  \addtolength{\itemsep}{3pt}
\item
  $\E\left[U\left(S((Z_t^L)_{0\leq t\leq 1}, \theta_L) - \int_0^1
      c(\mu_t^L)\, \dd t\right)\right]=U(w_L)$;\hfill {\em (PCL)}
\item $\left.\right.$\vspace*{-3\baselineskip}
    
    \begin{multline*}
      \E\left[U\left(S((Z_t^{H,L})_{0\leq t\leq 1}, \theta^L) -
          \int_0^t c(\mu_t^{H,L})\, \dd t\right)\right] %
      = \E\left[U\left(S((Z_t^H)_{0\leq t\leq 1}, \theta^H) -
          c(\mu^{H,\star})\right)\right] \\%
      = \E\Big[1-\exp\Big(-\rho\Big(\underbrace{w_L + \int_0^1
        c'(\mu_t^{H,L})\mu_t^{H,L} - c(\mu_t^{H,L}) -
        \left[c'(\mu_t^L)\mu_t^L - c(\mu_t^L)\right]\, \dd
        t}_{=:u_1(\mu^L)}\Big)\Big)\Big],
    \end{multline*}
    with $\mu_t^{H,L}$ given by Equation \eqref{eq:4}; \hfill {\em
      (ICCH)}
  \item
    $\E\left[U\left(S((Z_t^H)_{0\leq t\leq 1}, \theta^H) -
        c(\mu^{H,\star})\right)\right] \geq \E[U(w_H)]$;
    \hfill {\em (PCH)}
  \end{enumerate}

  \begin{proposition}
    \label{prop:optimalcontract}
    Under adverse selection, moral hazard and when reservation
    utilities are type-dependent, the optimal effort levels
    $\mu^{H,\star}$, $\mu^{L,\star}$ in the contracts designed for the
    $H$-type, resp.\ $L$-type agent are constant, and optimal
    contracts are linear in the terminal outputs $Z_1^H$ and $Z_1^L$.
  \end{proposition}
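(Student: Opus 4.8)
The plan is to reduce the menu-design problem to a pointwise maximisation over the instantaneous effort rates and then read off linearity from the resulting constant controls. First, by Theorem~\ref{theorem:hm}, every admissible contract is, up to the choice of a target effort process and a certainty equivalent, of the form \eqref{eq:7}; so the principal is really choosing $(\mu^H,\mu^L)$ together with two certainty equivalents. The binding constraint \emph{(PCL)} fixes the $L$-type's certainty equivalent at the given reservation value $w_L$, and the binding constraint \emph{(ICCH)} fixes the $H$-type's certainty equivalent at $u_1(\mu^L)$, a functional of $\mu^L$ alone ($\mu^{H,L}$ being determined from $\mu^L$ by \eqref{eq:4}). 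Substituting the two copies of \eqref{eq:7} into the objective of Problem~\ref{problem} and using that $\E\int_0^1 \tfrac{c'(\mu_t^k)}{\theta_k}\,\dd Z_t^k=\E\int_0^1 c'(\mu_t^k)\mu_t^k\,\dd t$ while all remaining stochastic integrals have zero mean (legitimate since $\mu^H,\mu^L$ are bounded), the objective collapses to
\begin{equation*}
\alpha\,\E\!\int_0^1 F_H(\mu_t^H)\,\dd t \;+\; \E\!\int_0^1\big[(1-\alpha)F_L(\mu_t^L)-\alpha H(\mu_t^L)\big]\,\dd t\;+\;\text{const},
\end{equation*}
where $F_k(\mu)=\theta_k\mu-c(\mu)-\tfrac{\rho}{2}\sigma^2(c'(\mu)/\theta_k)^2$ is mean output net of effort cost and CARA risk premium, and $H$ is the imitation-rent density defining $u_1(\mu^L)$ in \emph{(ICCH)}.

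This objective is the expectation of a time integral whose integrand at time $t$ depends on the controls only through their current values, so it is bounded above by $\alpha\max_\nu F_H(\nu)+\max_\nu\big[(1-\alpha)F_L(\nu)-\alpha H(\nu)\big]+\text{const}$, with equality precisely for the constant controls $\mu_t^H\equiv\mu^{H,\star}$ and $\mu_t^L\equiv\mu^{L,\star}$ attaining these two maxima. This is exactly where the hypothesis $c'''\ge0$ enters: it gives $F_k''(\mu)=-c''(\mu)-\rho\sigma^2\theta_k^{-2}\big[(c''(\mu))^2+c'(\mu)c'''(\mu)\big]<0$, so $F_H$ has a unique maximiser (exactly the second-best, moral-hazard-only, effort rate), and one checks that the first-order condition for $(1-\alpha)F_L-\alpha H$ likewise has a unique solution (using in addition $c''>0$ and $\theta_H>\theta_L$, which gives $\mu^{H,L}>\mu^L$). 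With $\mu^{k,\star}$ constant, \eqref{eq:7} specialises, using $\int_0^1\dd t=1$ and $\int_0^1\dd Z_t^k=Z_1^k-Z_0^k$, to $S(\cdot,\theta_k)=a_k+\tfrac{c'(\mu^{k,\star})}{\theta_k}Z_1^k$ with $a_k$ a constant, i.e.\ linear in the terminal output, which is the claim.

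The step I expect to be the main obstacle is putting the ``maximise the integrand pathwise over \emph{all} adapted controls'' reasoning on a rigorous footing, equivalently, justifying both the identity $w_H=u_1(\mu^L)$ and the impossibility of outperforming constant controls with stochastic ones. This is the part the introduction earmarks for stochastic-control methods, and the natural route parallels the proof of Proposition~\ref{prop:effort}: take as state variable the running conditional expectation of the principal's terminal payoff, note that for any fixed admissible control it is a martingale and hence solves a linear parabolic PDE, pass to the Hamilton-Jacobi-Bellman equation by taking the supremum over the bounded controls, verify that it admits a classical solution affine in the state (which is what both forces the maximising control to be time-homogeneous and makes the underlying CARA/Girsanov computations exact), and conclude via a verification theorem of the type in \citet{Fleming1975}. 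A last, purely bookkeeping point is to confirm that the constraints dropped along the way, notably \emph{(PCH)} and the $L$-type's \emph{(ICC)}, remain slack at the constant optimum; this is standard in this regime (cf.\ \citet{Salanie2005}) and mirrors the corresponding check in Proposition~\ref{prop:effort}.
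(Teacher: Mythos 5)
Your reduction has a genuine gap exactly at the step where you ``collapse'' the objective to $\E\int_0^1\big[(1-\alpha)F_L(\mu_t^L)-\alpha H(\mu_t^L)\big]\,\dd t+\text{const}$ and then bound it by pointwise maximisation over all adapted controls. The binding \emph{(ICCH)} pins down the $H$-type's \emph{expected utility}, so the constant transferred to the $H$-type in \eqref{eq:1} is the certainty equivalent $-\rho^{-1}\ln\E[\e^{-\rho\,u_1(\mu^L)}]$, not the mean $\E[u_1(\mu^L)]=w_L+\E\int_0^1 H(\mu_t^L)\,\dd t$. The two coincide only when $\mu^L$ is deterministic; for a genuinely stochastic control, Jensen's inequality gives $-\rho^{-1}\ln\E[\e^{-\rho\,u_1(\mu^L)}]\leq \E[u_1(\mu^L)]$, i.e.\ randomising the recommended $L$-effort \emph{lowers} the risk-adjusted rent the principal must leave to the risk-averse $H$-type. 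Consequently the true objective exceeds your linearised functional by $\alpha\big(\E[u_1(\mu^L)]-\text{CE}\big)\geq 0$, and the inequality ``objective $\leq \alpha\max_\nu F_H(\nu)+\max_\nu[(1-\alpha)F_L(\nu)-\alpha H(\nu)]+\text{const}$'' does not follow for the true objective. Ruling out that the principal gains from stochastic controls is precisely the nontrivial content of the proposition, not a bookkeeping point; the paper's proof keeps the nonlinear term $\frac{\partial}{\partial t}\frac{\ln\E[\e^{-\rho\,u_t(\mu^L)}]}{\rho}$ of Equation \eqref{eq:6} inside the principal's Hamiltonian and obtains constancy of the control from the martingale/HJB formulation and the principle of optimality. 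Your closing paragraph sketches essentially this route as a patch for the ``main obstacle'', but that means the argument you actually give is the invalid pointwise bound, while the correct argument is left at the level of an outline of the paper's own method.

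A second, smaller omission: you only treat the regime in which \emph{(PCH)} is slack. With type-dependent reservation utilities (the point of the statement), the binding \emph{(PCH)} case is part of the claim and cannot be dismissed by ``confirming the constraint remains slack''; the paper handles it separately, deducing constancy of $\mu^{L,\star}$ from $\E[\e^{-\rho\,u_1(\mu^L)}]=\e^{-\rho\,w_H}$ and characterising the optimum by \eqref{eq:2}. The remaining ingredients of your proposal do match the paper: once constancy is in hand, substituting constants into \eqref{eq:7} yields sharing rules affine in $Z_1^k$, and $c'''\geq 0$ plays the same second-order role as in the discussion of \eqref{eq:5}.
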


  In the proof it is shown that if {\em (PCH)\/} is non-binding, then
  $\mu^{L,\star}$ satisfies
  \begin{equation}
    \label{eq:5}
    c'(\mu^{L,\star}) = \left(\theta_L - \frac{\alpha}{1-\alpha} \left\{
        c''(\mu^{H,L,\star}) \mu^{H,L,\star}\frac{\partial}{\partial \mu^{L,\star}}
        \mu^{H,L,\star} - c''(\mu^{L,\star})\mu^{L,\star}\right\} \right)\,
    \left(1+\frac{\rho\, \sigma^2}{\theta_L^2}
      c''(\mu^{L,\star})\right)^{-1}, 
  \end{equation}
  By the assumption that $c'''\geq 0$ it follows directly that the
  optimal control is smaller than the second best optimal control
  without adverse selection (which is obtained when $\alpha=0$).  If
  {\em (PCH)\/} is binding, then $\mu^{L,\star}$ satisfies
  \begin{equation}
    \label{eq:2}
    w_H - w_L = c'(\mu^{H,L,\star}) \mu^{H,L,\star} - c(\mu^{H,L,\star})
    - [c'(\mu^{L,\star}) \mu^{L,\star}- c(\mu^{L,\star})]. 
  \end{equation}

\begin{proof}
  First, assume that {\em (PCH)\/} is non-binding (this case arises
  when the difference between reservation utilities is small or zero).
  From {\em (ICCH)\/}, where the right-hand side depends only on the
  effort level incentivized by the principal, it follows that the
  $H$-type's certainty equivalent is
  $\displaystyle -\frac{\ln \E[\e^{-\rho\, u_1(\mu^L)}]}{\rho}$.

  Using Theorem \ref{theorem:hm}, the conditions imply
  \begin{multline}
    S((Z_t^L)_{0\leq t\leq 1}, \theta_L) \\ %
    = w_L + \int_0^1 c(\mu_t^L)\,\dd t + \int_0^1
    \frac{c'(\mu_t^L)}{\theta_L}\, \dd Z_t^L - \int_0^1 c'(\mu_t^L)\,
    \mu_t^L\, \dd t + \frac{\rho}{2} \int_0^1
    \left(\frac{c'(\mu_t^L)}{\theta_L}\right)^2 \sigma^2\, \dd
    t\label{eq:8}
  \end{multline}
  and
  \begin{multline}
    S((Z_t^H)_{0\leq 1\leq t}, \theta_H) \\ %
    = \frac{-\ln \E[\e^{-\rho\, u_1(\mu^L)}]}{\rho} + c(\mu^{H,\star})
    + \frac{c'(\mu^{H,\star})}{\theta_H} Z_1^H - c'(\mu^{H,\star})
    \mu^{H,\star} + \frac{\rho}{2} \left(\frac{c'(\mu^{H,\star})}
      {\theta_H}\right)^2 \sigma^2. \label{eq:1}
  \end{multline}
  Setting
  \begin{multline*}
    X_t^{\mu^L}:=\alpha \left(Z_t^H -\left\{ -\frac{\ln\E[\e^{-\rho\,
            u_t(\mu^L)}]}{\rho} + c(\mu^{H,\star}) t + \frac{\rho}{2}
        \left(\frac{c'(\mu^{H,\star})}{\theta_H}\right)^2 \sigma^2\,
        t+ \frac{c'(\mu^{H,\star})}{\theta_H}\sigma\,
        W_t\right\}\right) \\
    + (1-\alpha) \left(Z_t^L - \left\{w_L\, t + \int_0^t c(\mu_u^L)\,
        \dd u + \frac{\rho}{2} \int_0^t
        \left(\frac{c'(\mu_u^L)}{\theta_L}\right)^2\sigma^2\, \dd u +
        \int_0^t \frac{c'(\mu_t^L)}{\theta_L}\, \sigma\, \dd
        W_t\right\}\right),
  \end{multline*}
  with $\dd Z_t^L = \mu_t^L\, \theta_L\, \dd t + \sigma\, \dd W_t$ and
  $\dd Z_t^H = \mu^{H,\star}\theta_H\, \dd t + \sigma\, \dd W_t$, the
  principal's problem is
  \begin{equation*}
    \sup_{\mu_L} \E[X_1^{\mu^L}]. 
  \end{equation*}
  The dynamics of $X^{\mu^L}$ are
  \begin{align*}
    \dd X_t^{\mu^L} %
    &= \alpha \Bigg( \left\{\mu^{H,\star} \theta_H +
      \frac{\partial}{\partial t} \frac{\ln 
      \E[\e^{-\rho\, u_t(\mu^L)}]}{\rho} - c(\mu^{H,\star}) -\frac{\rho}{2}
      \left(\frac{c'(\mu^{H,\star})}{\theta_H}
      \right)^2 \sigma^2\right\}\, \dd t\Bigg. \\
    &\phantom{=\alpha\,} +
      \left(1-\frac{c'(\mu^{H,\star})}{\theta_H}\right)\sigma\, \dd W_t
      \Bigg.\Bigg) 
    \\ %
    &+ (1-\alpha) \left(\left\{\mu_t^L\theta_L -
      w_L - c(\mu_t^L) - \frac{\rho}{2}
      \left(\frac{c'(\mu_t^L)}{\theta_L}\right)^2\sigma^2\right\}\,
      \dd t + \left(1-\frac{c'(\mu_t^L)}{\theta_L}\right) \sigma\, \dd
      W_t\right), 
  \end{align*}
  and the principal's Hamiltonian is
  \begin{equation*}
    \mathcal H = \alpha\, \frac{\partial}{\partial t} \frac{\ln \E[\e^{-\rho\,
        u_t(\mu^L)}]}{\rho} + (1-\alpha)\left\{ \mu_t^L \theta_L -
      c(\mu_t^L) - \frac{\rho}{2} \left(\frac{c'(\mu_t^L)} {\theta_L}
      \right)^2 \sigma^2\right\},
  \end{equation*}
  with
  \begin{equation}
    \label{eq:6}
    \frac{\partial}{\partial t} \frac{\ln \E[\e^{-\rho\,
        u_t(\mu^L)}]}{\rho} %
    = -\frac{\E\left[\e^{-\rho\, u_t(\mu^L)}\, \left\{w_L +
          c'(\mu_t^{H,L})\mu_t^{H,L}-c(\mu_t^{H,L}) - [c'(\mu_t^L)\mu_t^L -
          c(\mu_t^L)]\right\}\right]} {\E[\e^{-\rho\, u_t(\mu^L)}]}.
  \end{equation}
  Equation \eqref{eq:6} describes the change in certainty equivalent
  offered to the $H$-type agent, including the information rent to
  prevent her from imitating. This is $\mathcal F_0$-measurable, i.e.,
  fixed at time $0$. By the principle of optimality, the optimal
  change in certainty equivalent does not depend on any particular
  time $t$; likewise the optimal control does not depend on the
  particular time $t$, so that Equation \eqref{eq:6} is constant. This
  is fulfilled for a deterministic and constant control. A constant
  optimal control is necessary as well, as the Hamiltonian is
  optimised by a deterministic choice of $\mu_t^L$, which is constant
  by the principle of optimality. Hence, \eqref{eq:6} becomes
  \begin{equation*}
    \frac{\partial}{\partial t} \frac{\ln \E[\e^{-\rho\, u_t(\mu^L)}]}
    {\rho} = -\left\{w_L +
      c'(\mu^{H,L})\mu^{H,L}-c(\mu^{H,L}) - [c'(\mu^L)\mu^L -
      c(\mu^L)]\right\}, 
  \end{equation*}
  and the Hamiltonian simplifies to
  \begin{multline*}
    \mathcal H = -\alpha \Bigg\{
    \underbrace{c'(\mu^{H,L})}_{=c'(\mu^L) \theta_H/\theta_L}
    \mu^{H,L}-c(\mu^{H,L}) - [c'(\mu^L)\mu^L -  c(\mu^L)]\Bigg\} \\
    + (1-\alpha) \left\{ \mu^L \theta_L - c(\mu^L) - \frac{\rho}{2}
      \left(\frac{c'(\mu^L)} {\theta_L} \right)^2 \sigma^2\right\}.
  \end{multline*}
  The optimum is determined via
  \begin{multline*}
    \frac{\partial}{\partial \mu^L} \mathcal H = -\alpha \left\{
      \frac{\theta_H}{\theta_L} c''(\mu^L)
      \mu^{H,L}\, \frac{\partial}{\partial \mu^L} \mu^{H,L}
      - c''(\mu^L)\mu^L\right\} \\
    + (1-\alpha) \left\{\theta_L - c'(\mu^L) - \frac{\rho\,
        \sigma^2}{\theta_L^2} c'(\mu^L) c''(\mu^L)\right\},
  \end{multline*}
  which is zero if $\mu^{L,\star}$ fulfills Equation \eqref{eq:5},
  requiring this is nonnegative.  It is easily verified that this is a
  minimum by $c'''\geq 0$.  The sharing rules \eqref{eq:8} and
  \eqref{eq:1} depend only on $Z_1$ instead of $(Z_t)_{0\leq t\leq 1}$
  and are linear in $Z_1$.

  \medskip

  If {\em (PCH)\/} is binding, then
  \begin{equation*}
    \E[\e^{-\rho\, u_1(\mu^L)}] = \e^{-\rho \, w_H},
  \end{equation*}
  and
  \begin{equation*}
    \dd \E[\e^{-\rho\, u_t(\mu^L)}] = \dd \e^{-\rho\, w_H\, t} = -\rho\,
    w_H\, \e^{-\rho\, w_H\, t}\, \dd t. 
  \end{equation*}
  The left-hand side is therefore an expectation of an exponential
  accruing at a constant rate. Since $u_1(\mu^L)$ is comprised of a
  constant and an integral with respect to time, by the principle of
  optimality, the exponent itself must be constant. Hence
  $\mu^{L,\star}$ is constant. Furthermore, $\mu^{L,\star}$ solves the
  binding participation constraint, which can be expressed as in
  Equation \eqref{eq:2}.
\end{proof}

\bibliographystyle{abbrvnamed} %
\bibliography{finance}

\begin{thebibliography}{}

\bibitem[\protect\citeauthoryear{Bannier \bgroup \em et al.\egroup
  }{2016}]{Bannier2016}
C.~E. Bannier, E.~Feess, and N.~Packham.
\newblock Incentive schemes, private information and the double-edged role of
  competition for agents.
\newblock Working Paper, June 2016.

\bibitem[\protect\citeauthoryear{B{\'e}nabou and Tirole}{2016}]{Benabou2016}
R.~B{\'e}nabou and J.~Tirole.
\newblock Bonus culture: Competitive pay, screening, and multitasking.
\newblock {\em Journal of Political Economy}, 124(2):305--370, 2016.

\bibitem[\protect\citeauthoryear{Fleming and Rishel}{1975}]{Fleming1975}
W.~H. Fleming and R.~W. Rishel.
\newblock {\em Deterministic and Stochastic Optimal Control}.
\newblock Springer, 1975.

\bibitem[\protect\citeauthoryear{Holmstrom and Milgrom}{1987}]{Holmstrom1987}
B.~Holmstrom and P.~Milgrom.
\newblock Aggregation and linearity in the provision of intertemporal
  incentives.
\newblock {\em Econometrica}, 55:303--328, 1987.

\bibitem[\protect\citeauthoryear{Jullien}{2000}]{Jullien2000}
B.~Jullien.
\newblock Participation constraints in adverse selection models.
\newblock {\em Journal of Economic Theory}, 93(1):1--47, 2000.

\bibitem[\protect\citeauthoryear{Salani\'e}{2005}]{Salanie2005}
B.~Salani\'e.
\newblock {\em The Economics of Contracts}.
\newblock MIT Press, 2nd edition, 2005.

\bibitem[\protect\citeauthoryear{Schaettler and Sung}{1993}]{Schaettler1993}
H.~Schaettler and J.~Sung.
\newblock The first-order approach to the continuous-time principal--agent
  problem with exponential utility.
\newblock {\em Journal of Economic Theory}, 61(2):331--371, 1993.

\bibitem[\protect\citeauthoryear{Sung}{2005}]{Sung2005}
J.~Sung.
\newblock Optimal contracts under adverse selection and moral hazard: a
  continuous-time approach.
\newblock {\em Review of Financial Studies}, 18(3):1021--1073, 2005.

\end{thebibliography}

\end{document}